\documentclass[twoside,a4paper,12pt]{amsart}
\usepackage{amsmath}
\usepackage{amsfonts}
\usepackage{amssymb}
\usepackage{commath}
\usepackage{graphicx}

\setcounter{MaxMatrixCols}{30}
\providecommand{\U}[1]{\protect\rule{.1in}{.1in}}
\newtheorem{theorem}{Theorem}

\newtheorem{lemma}[theorem]{Lemma}

\newtheorem{remark}[theorem]{Remark}

\voffset=-5 mm \hoffset=5 mm
\setlength{\textwidth}{135 mm}
\setlength{\textheight}{230 mm}
\setlength{\topmargin}{0mm}

\begin{document}
\title{Negative potentials and collapsing universes}
\author{Roberto Giamb\`{o}}
\email{roberto.giambo@unicam.it}
\address{School of Science and Technology, Mathematics Division, University
of Camerino, 62032 Camerino (MC) Italy.}
\author{John Miritzis}
\email{imyr@aegean.gr}
\author{Koralia Tzanni}
\email{tzanni@aegean.gr}
\address{Department of Marine Sciences, University of the Aegean, University
Hill, Mytilene 81100, Greece.}

\begin{abstract}
We study Friedmann--Robertson--Walker models with a perfect fluid matter
source and a scalar field nonminimally coupled to matter. We prove that a
general class of bounded from above potentials which fall to minus infinity
as the field goes to minus infinity, forces the Hubble function to diverge
to $-\infty$ in a finite time. This finite-time singularity theorem is true for arbitrary coupling coefficient, provided that it is a bounded function of the scalar field.
\end{abstract}

\maketitle

\section{Introduction}

The expansion history of the Universe is marked by two characteristic
phases, namely, inflation at the early stage and the present longer period
of acceleration. In both situations, scalar fields are essential ingredients
in the construction of cosmological scenarios aiming to describe the
evolution of the early and the present Universe. The accelerating phase of
the Universe requires scalar fields with non-negative potentials playing the
role of a cosmological term. However, potentials taking negative values
cannot be avoided in cosmological models (see \cite{ffkl} for motivations).
It is possible for the potential to either exhibit a negative minimum or to
be unbounded from below. In these cases, it is generally believed that the
Universe eventually collapses even if it is flat. This feature has been
observed in several specific models with negative potentials \cite{ffkl}-%
\cite{kowa}. In particular, for potentials falling to minus infinity as $%
\phi\rightarrow -\infty$, heuristic arguments were given in a recent paper 
\cite{tzmi}, indicating that a flat initially expanding Universe may
recollapse. Non-negative potentials with mathematically rigorous results,
have been studied by several authors (see for example \cite{foster,rend,gimi}%
), but there is not a corresponding rigorous treatment of negative
potentials. Collapsing models were built using homogeneous scalar field
solutions in \cite{foster, giam1, joshi, zia1} and the case where a scalar
field is coupled to a perfect fluid was studied in \cite{giam2,zia2}.

The physical reason to understand why the Universe eventually collapses when 
$V<0,$ is that in the Friedmann equation, 
\begin{equation*}
3H^{2}=\rho_{\mathrm{total}},
\end{equation*}
the positive energy density of ordinary matter, as well as the positive
kinetic energy density of the scalar field, decreases in an expanding
Universe. At some moment, the total energy density $\rho_{\mathrm{total}}$,
including the negative contribution $V\left( \phi\right) <0$, vanishes. Once
it happens, the Universe, stops expanding and enters the stage of
irreversible collapse \cite{kali}.

Of particular importance are potentials having a global positive maximum and
negatively diverging as $\phi \rightarrow \pm \infty $. These include
cosmological models in $N=2,4,8$ gauged supergravity \cite{linde1,klps1}, as
well as double exponential potentials studied by several authors \cite%
{double}; double exponential potentials with nonminimal coupling were
studied in \cite{tzmi}. The physical interest of these potentials is
described in \cite{kali}, where it is shown that if initially the field $%
\phi _{0}$ is near the value corresponding to the maximum of the potential,
it takes time $t\sim 0.7H_{0}^{-1}\ln \phi _{0}^{-1}$, until the field rolls
down from $\phi _{0}$ to the region where $V\left( \phi \right) $ becomes
negative and the Universe collapses. This time is comparable to the age of
our Universe, $H_{0}^{-1},$ and therefore it is possible that the present
Universe is into an accelerated phase, yet it will collapse in about 18
billion years. For detailed cosmological implications see \cite%
{klps,kali,kali1}.

In this paper we study the recollapse problem of scalar-field cosmological
models with negative potentials from a mathematical point of view. We assume
that the scalar field is nonminimally coupled to matter; the coupling
coefficient is assumed to be an arbitrary bounded function of the scalar
field. For motivation and more general couplings see \cite{tzmi,leon}.
Inclusion of nonminimal coupling increases the mathematical difficulty of
the analysis, yet it is important to consider nonminimal coupling in
scalar-field cosmology \cite{fuma}. As is stressed in \cite{fara1}, the
introduction of nonminimal coupling is not a matter of taste; a large number
of physical theories predicts the presence of a scalar field coupled to
matter. We consider a general class of potentials that are free to fall to $%
-\infty $ as $\phi \rightarrow -\infty $, have a global positive maximum and
go to zero from above as $\phi \rightarrow +\infty $. This class of
potentials includes the double exponential potentials mentioned above. We
show rigorously that almost always initially expanding flat universes
eventually recollapse. This result does not depend on the particular
functional form of the potential.

The plan of the paper is as follows. In the next section we write the field
equations for flat Friedmann--Robertson--Walker (FRW) models as a
constrained four-dimensional dynamical system and show a number of
preliminary results. In section \ref{sec3} we rigorously prove that a
general class of bounded from above potentials with $\lim_{\phi\rightarrow-%
\infty }V\left( \phi\right) =-\infty$, forces the Hubble function $H$ to
diverge to $-\infty$ in a finite time. Section \ref{sec4} is a brief
discussion about the classes of negative potentials studied in the
literature.

\section{Coupled scalar field models}

\label{sec2}

For homogeneous and isotropic flat spacetimes the field equations, reduce to
the Friedmann equation, 
\begin{equation}
3H^{2}=\rho +\frac{1}{2}\dot{\phi}^{2}+V\left( \phi \right) ;  \label{frie}
\end{equation}%
the Raychaudhuri equation, 
\begin{equation}
\dot{H}=-\frac{1}{2}\dot{\phi}^{2}-\frac{\gamma }{2}\rho ;  \label{ray}
\end{equation}%
the equation of motion of the scalar field, 
\begin{equation}
\ddot{\phi}+3H\dot{\phi}+V^{\prime }\left( \phi \right) =\frac{4-3\gamma }{2}%
Q\left( \phi \right) \rho ;  \label{ems}
\end{equation}%
and the conservation equation, 
\begin{equation}
\dot{\rho}+3\gamma \rho H=-\frac{4-3\gamma }{2}Q\left( \phi \right) \rho 
\dot{\phi}.  \label{conss}
\end{equation}%
An overdot denotes differentiation with respect to cosmic time $t$, $a\left(
t\right) $ is the scale factor, $H=\dot{a}/a$, is the Hubble function, and
units have been chosen so that $c=1=8\pi G.$ Here $V\left( \phi \right) $ is
the potential energy of the scalar field and $V^{\prime }\left( \phi \right)
=dV/d\phi $. Ordinary matter is described by a perfect fluid with equation
of state $p=(\gamma -1)\rho $, where $0\leq \gamma \leq 2$. The  coupling
coefficient, $Q\left( \phi \right) $, is assumed to be a non-negative,
bounded from above function, but otherwise arbitrary. Interaction terms
between the two matter components of the form $-\alpha \rho \dot{\phi}$, as
in Eq. (\ref{conss}), with a simple exponential potential and $\alpha =%
\mathrm{constant}$, were firstly considered in \cite{bico} (see also {\cite%
{bclm}}, and \cite{bapi} where $\alpha $ is an exponentially decreasing
function of $\phi $).  Interaction of the form $-\alpha \rho \dot{\phi}$,
appears naturally in scalar-tensor theories of gravity, where $Q$ is related
to the dilaton $\chi $ (see for example equation (1) in Refs. \cite{tzmi} or 
\cite{leon}), by $Q=\mathrm d\ln \chi /\mathrm d\phi $. The presence of the trace of the
energy-momentum tensor in the right-hand side of equations (\ref{ems})--(\ref%
{conss}), implies that there is no energy exchange between radiation and the
scalar field. Interaction between radiation and the scalar field is relevant
in the warm inflationary scenario \cite{bmr}, but since we are interested in
the late time evolution of the Universe, the case $\gamma =4/3$, is not
crucial in our analysis.

Although there is an energy exchange between the fluid and the scalar field,
it is easy to see that the set, $\rho >0,$ is invariant under the flow of
Eqs. \eqref{ray}--\eqref{conss}; therefore $\rho $ is nonzero if initially $%
\rho \left( t_{0}\right) $ is nonzero. This trivial physical requirement is
not satisfied if one assumes arbitrary interaction terms, cf. \cite{miri3}.

In the rest of the paper we suppose that $V(\phi)$ is a $C^{1}$ potential
such that:

\begin{enumerate}
\item $\lim_{\phi\rightarrow-\infty}V(\phi)=-\infty$ and $\lim_{\phi
\rightarrow+\infty}V(\phi)=0$.

\item The potential has a unique critical point $\phi_{m}>0,$ with $V(\phi
_{m})>0$, i.e., $\phi_{m}$ is a global maximum. Moreover $\phi_{m}$ is non
degenerate, i.e., $V^{\prime\prime}(\phi_{m})<0$.

\item There exist $\Lambda>0$ and $M<0$ such that 
\begin{equation}
V^{\prime}(\phi)\leq-\Lambda\,V(\phi),\qquad \text{for all }\phi<M. 
\label{eq:Vstima}
\end{equation}
\end{enumerate}

\begin{figure}[th]
\begin{center}
\includegraphics{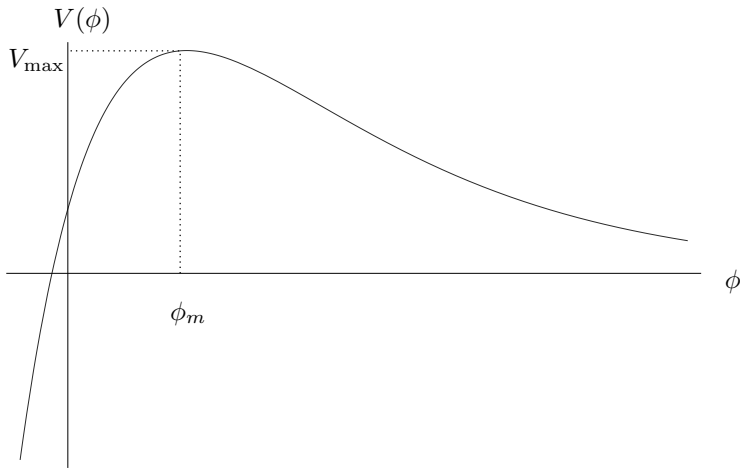}
\end{center}
\caption{A typical potential satisfying the assumptions 1,2,3. }
\label{fig1}
\end{figure}

Potentials of this type include for example double exponential potentials
studied in \cite{tzmi}. In particular, condition \eqref{eq:Vstima}
establish a bound for the growth of $|V(\phi)|$ to infinity, that must be at
most exponential.

We will also assume that the function $Q(\phi)$, that is the logarithmic derivative of the dilaton $\chi(\phi)$, is bounded for all $\phi\in \mathbb{R}$: in particular, we will suppose the existence of a constant $A$ such that
\begin{equation}\label{eq:Q}
\left\vert \frac{4-3\gamma}{2}Q(\phi)\right\vert\le A.
\end{equation}

The dynamical system \eqref{ray}--\eqref{conss} has only two finite
equilibrium points, 
\begin{equation*}
\left( \phi=\phi_{m},\dot{\phi}=0,\rho=0,H=\pm\sqrt{V_{\max}/3}\right).
\end{equation*}
They represent de Sitter and anti-de Sitter solutions and it is easy to see
that are unstable. It is known that for potentials with a maximum, the field
near the top of the potential corresponds to the tachyonic (unstable) mode
with negative mass squared \cite{linde1,klps1}. The other asymptotic states
of the system correspond to the points at infinity, $\phi\rightarrow\pm\infty
$.

It can be seen that if initially $\phi \left( 0\right) \equiv \phi _{0}<\phi
_{m}$, then there is a critical value $\dot{\phi}_{\mathrm{crit}}>0$, which
allows for $\phi $ to pass on the right of $\phi _{m}$. More precisely, in
the case of zero coupling, $Q=0,$ it is easy to show that there exists a
critical value of $\dot{\phi}$, say $\dot\phi_{\mathrm{crit}}$, such that if $\phi _{0}<\phi _{m}$ and $\dot{%
\phi}\left( 0\right) <\dot{\phi}_{\mathrm{crit}}$, then $\phi \left(
t\right) \ $ remains less than $\phi _{m}$\ for all $t\geq 0$. The argument
is similar to the mechanical analog of the motion of a particle in the
potential $V\left( \phi \right) $, according to Eq. \eqref{ems}. In the case
of nonminimal coupling, the energy density of the scalar field is not
necessarily decreasing, because there is an energy exchange between the
scalar field and the fluid. An estimation of the maximum allowable value of $%
\dot{\phi}_{0}$ can be obtained from (\ref{frie}),  supposing that initially $%
3H_{0}^{2}\leq V\left( \phi _{m}\right) =V_{\max }$. Indeed, since by Eq. %
\eqref{ray} $H$ is decreasing, 
\begin{equation}
\rho \left( t\right) +\frac{1}{2}\dot{\phi}\left( t\right) ^{2}+V\left( \phi
\left( t\right) \right) \leq V_{\max },\ \ \ \text{for all }t\geq 0,
\label{in}
\end{equation}%
which implies that $V\left( \phi \left( t\right) \right) \leq V_{\max }\ $%
for all $t\geq 0$, and therefore $\phi \left( t\right) <\phi _{m}$\ for all $%
t\geq 0$. Moreover, inequality (\ref{in}) and initial condition on $H(t)$ establish a maximum allowable value of $%
\dot{\phi}_{\mathrm{crit}},$ 
\begin{equation*}
\dot{\phi}_{\mathrm{crit}}\leq \sqrt{2V_{\max }}.
\end{equation*}%
Therefore \emph{\ if }$0<H_{0}\leq \sqrt{V_{\max }/3}$%
\emph{, then }$\phi (t)$ \emph{never crosses the maximum of} $V(\phi)$ \emph{\ throughout the evolution}.

Setting $\dot{\phi}=y$, we can write Eqs. \eqref{ray}--\eqref{conss} as an
autonomous dynamical system, 
\begin{align}
\dot{\phi} & =y,  \label{eq:1} \\
\dot{y} & =-3Hy-V^{\prime}(\phi)+\alpha(\phi)\rho,\qquad\alpha(\phi):=\frac{4-3\gamma }{2%
}Q(\phi),  \label{eq:2} \\
\dot{\rho} & =-\rho(3\gamma H+\alpha(\phi) y),  \label{eq:3} \\
\dot{H} & =-\frac{1}{2}y^{2}-\frac{\gamma}{2}\rho,   \label{eq:4}
\end{align}
subject to the constraint 
\begin{equation}
3H^{2}=\frac{1}{2}y^{2}+V(\phi)+\rho.   \label{eq:5}
\end{equation}
We recall the remarkable property of the Einstein equations that, if Eq. %
\eqref{eq:5} is satisfied at some initial time, then it is satisfied
throughout the evolution.

Our aim is to prove the following Theorem, conjectured in \cite{tzmi}.

\begin{theorem}\label{thm:main}
Let $\gamma(t)=(\phi(t),y(t),\rho(t),H(t))$ a solution to the
system \eqref{eq:1}--\eqref{eq:5} such that $\phi(t_{0})<\phi_{m}$,
$\rho(t_{0})>0$, $H(t_{0})>0$, and $y(t_{0})<\dot{\phi}_{crit}$, where
$\dot{\phi}_{crit}$ is the critical value that allows for $\phi$ to pass to
the right of $\phi_{m}$. Then $H(t)$ generically (i.e., up to a zero--measured
set of initial data) negatively diverges in a finite time:
\begin{equation}\label{eq:lem}
\exists t_{*}>0 \text{ such that } \lim_{t\to t_*^-} H(t)=-\infty.
\end{equation}
\end{theorem}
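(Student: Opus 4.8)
The plan is to split the statement into a soft part---once $H$ is negative enough it must blow up in finite time---and a hard part---generically $H$ does become that negative. I would first record the monotonicity already built into the system: by the Raychaudhuri equation \eqref{eq:4} and the invariance of $\{\rho>0\}$ noted above, $\dot H=-\tfrac12 y^{2}-\tfrac\gamma2\rho\le0$, so $H$ is nonincreasing; and the hypotheses $\phi(t_0)<\phi_m$, $y(t_0)<\dot{\phi}_{\mathrm{crit}}$ force $\phi(t)<\phi_m$ for all $t$ by the very definition of $\dot{\phi}_{\mathrm{crit}}$, whence $V(\phi)\le V_{\max}$ throughout. Eliminating $\tfrac12 y^{2}$ between \eqref{eq:4} and the constraint \eqref{eq:5} gives the identity
\begin{equation*}
\dot H=-\tfrac{3\gamma}{2}H^{2}-\tfrac12\bigl(1-\tfrac\gamma2\bigr)y^{2}+\tfrac\gamma2 V(\phi),
\end{equation*}
and since $0\le\gamma\le2$ and $V\le V_{\max}$ this yields $\dot H\le-\tfrac{3\gamma}{2}H^{2}+\tfrac\gamma2 V_{\max}$. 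For $\gamma>0$ the right-hand side is $\le-\tfrac{3\gamma}{4}H^{2}$ as soon as $H\le-\sqrt{2V_{\max}/3}$, so a Riccati comparison shows that $H$ diverges to $-\infty$ at a finite time $t_*\le t_1+\tfrac{4}{3\gamma\,|H(t_1)|}$, where $t_1$ is any instant with $H(t_1)<-\sqrt{2V_{\max}/3}$. This disposes of the finite-time part; the degenerate case $\gamma=0$ (a cosmological-constant fluid) is less relevant physically and has to be handled separately, since the identity above loses its $H^{2}$ term.

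It remains to show that, off a null set of data, $H$ is unbounded below. I would argue by contradiction on the ``bad set'' of initial data for which $H(t)\ge-K$ for all $t$. There $H\downarrow H_\infty$, and integrating \eqref{eq:4} gives $\int_{t_0}^{\infty}\bigl(\tfrac12 y^{2}+\tfrac\gamma2\rho\bigr)\,dt=H_0-H_\infty<\infty$, so (for $\gamma>0$) both $\int y^{2}\,dt$ and $\int\rho\,dt$ are finite. One first shows $\phi$ is bounded below on a bad orbit: arbitrarily deep excursions $\phi\to-\infty$ would have turning points where $y=0$, and there the constraint \eqref{eq:5} forces $\rho=3H^{2}-V(\phi)\ge|V(\phi)|\to+\infty$, which is incompatible with $\int\rho\,dt<\infty$. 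With $\phi$ confined to a compact subinterval of $(-\infty,\phi_m)$, the constraint then bounds $y$ and $\rho$ as well, so the orbit is relatively compact and has a nonempty connected $\omega$-limit set. Using $H$ itself as a Lyapunov function, a LaSalle argument confines this set to $\{y=0,\rho=0\}$, whose largest invariant subset is exactly the pair of equilibria $(\phi_m,0,0,\pm\sqrt{V_{\max}/3})$; by connectedness every bad trajectory converges to one of these (anti-)de Sitter points $P_\pm$.

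Condition~2 ($V''(\phi_m)<0$) makes $P_\pm$ hyperbolic with a genuine unstable (tachyonic) direction, so their stable manifolds $W^{s}(P_\pm)$ have codimension $\ge1$ in the constraint surface and are Lebesgue-null. Hence the bad set lies in $W^{s}(P_+)\cup W^{s}(P_-)$, a null set; off it $H$ decreases without bound, eventually crosses $-\sqrt{2V_{\max}/3}$, and the Riccati estimate of the first paragraph applies, giving \eqref{eq:lem}.

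The main obstacle is precisely this second, generic step: converting the heuristic ``the field rolls down and the universe recollapses'' into the rigorous dichotomy between convergence to the null stable manifolds and genuine collapse. Two points demand care. Making the exclusion of ever-deeper excursions quantitative, and then running the relative-compactness and $\omega$-limit analysis uniformly, is where condition~\eqref{eq:Vstima} is essential: the at-most-exponential growth of $|V|$, hence of the force $V'$ in \eqref{eq:2} (note $0<V'\le\Lambda|V|$ for $\phi$ sufficiently negative), controls the field dynamics and guarantees that along a collapsing orbit it is genuinely $H$ that diverges first, so that the blow-up time $t_*$ is attained within the maximal interval of existence rather than preempted by another singularity. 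Establishing the hyperbolicity of $P_\pm$---equivalently, that they are never attractors, so that $W^{s}(P_\pm)$ is null---from the linearised system is the remaining technical ingredient.
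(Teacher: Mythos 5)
Your outer architecture matches the paper's (a Riccati comparison for the blow-up once $H$ is negative enough, corresponding to the paper's Lemma~\ref{lem:2}, and a LaSalle-plus-stable-manifold argument for genericity, corresponding to Lemma~\ref{lem:1} and Remark~\ref{rem:nongen}), but the middle step --- showing that an orbit on which $H$ stays bounded is precompact --- contains a genuine gap, and it sits exactly where the real difficulty of the theorem lies. You argue that $\phi$ cannot make arbitrarily deep excursions because at the turning points $\tau_n$ (where $y=0$) the constraint \eqref{eq:5} forces $\rho(\tau_n)=3H^2-V(\phi(\tau_n))\to+\infty$, and you declare this ``incompatible with $\int\rho\,dt<\infty$.'' It is not: a nonnegative integrable function can blow up along a sequence of times via ever-narrower spikes. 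To exclude spikes you must control the logarithmic derivative $\dot\rho/\rho=-(3\gamma H+\alpha(\phi)y)$ from \eqref{eq:3}, which requires a bound on $|y|$ --- precisely what you do not have during a deep excursion, where $y$ is expected to be large. (This is why the paper splits into the cases ``$y$ bounded, $\rho$ unbounded,'' where the Gronwall-type control of $\rho$ \emph{is} available, and ``$y^2$ unbounded,'' which is the hard case.) A second symptom of the same gap: the growth hypothesis \eqref{eq:Vstima} never actually enters your computation; you only invoke it rhetorically at the end. In the paper it is indispensable: on an excursion from a crossing time $s_n$ (where $\phi=M$ and \eqref{eq:5} bounds $y(s_n)$) to a time $t_n$ with $y^2(t_n)\to\infty$, one integrates $\dot y$ and uses $\int_{s_n}^{t_n}V'(\phi)\,dt\le\Lambda\int_{s_n}^{t_n}(-V(\phi))\,dt\le\Lambda\int_{s_n}^{t_n}\bigl(\tfrac12y^2+\rho\bigr)\,dt$, which is finite by the integrated Raychaudhuri bound \eqref{eq:8}; without \eqref{eq:Vstima} the force term $V'$ is uncontrolled and $|y(t_n)|$ cannot be bounded. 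Your sketch has no substitute for this estimate.

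Two smaller points. First, your turning-point dichotomy also silently omits the sub-case where $\phi\to-\infty$ monotonically (no turning points at all); that case is actually easy ($V\to-\infty$ and the Riccati step applies), but it should be said. Second, your remark that $\gamma=0$ degenerates the Riccati inequality is fair --- the paper's Lemma~\ref{lem:2} implicitly assumes $\gamma>0$ as well --- and your codimension claim for $W^{s}(q_+)$ needs the transversality of the constraint surface \eqref{eq:5} to the $3$--dimensional stable manifold, which the paper flags explicitly in Remark~\ref{rem:nongen} and you pass over. The first and third paragraphs of your proposal are essentially correct and equivalent to the paper's; the second paragraph needs to be replaced by an argument of the paper's type (case split on which variable is unbounded, Gronwall control of $\rho$ when $y$ is bounded, and the $\int\dot y$ estimate using \eqref{eq:Vstima} when $y$ is not).
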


\section{Proof of Theorem \protect\ref{thm:main}}

\label{sec3} 
%\subsection{Non genericity of future bounded  cosmologies}\label{sec:nongen}
To prove the above theorem some preliminary results are in order. First of
all, let us prove that \emph{bounded solutions of the system \eqref{eq:1}--%
\eqref{eq:5} are non generic}.

\begin{lemma}\label{lem:1}Let
$\gamma(t)=\left(\phi(t),y(t),\rho(t),H(t)\right)$ be a bounded solution such that
$\rho(t_{0})>0$. Then $\gamma(t)\in W^{s}(q_{\pm})$, where
$W^{s}(q)$ is the stable manifold of an equilibrium point $q$
and $q_{\pm}=\left(  \phi_{m},0,0,\pm\sqrt{\tfrac{V(\phi_{m})}{3}}\right)
$ are the equilibria of the system.
\end{lemma}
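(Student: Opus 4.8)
The plan is to use the Hubble function $H$ itself as a Lyapunov function and to invoke LaSalle's invariance principle. The starting observation is that the region $\{\rho>0\}$ is forward invariant (as noted just before the statement of the Lemma), so $\rho(t)>0$ for all $t\geq t_{0}$; combined with $0\leq\gamma\leq2$ and $y^{2}\geq0$, equation \eqref{eq:4} gives
\[
\dot H=-\tfrac12 y^{2}-\tfrac{\gamma}{2}\rho\leq0 .
\]
Hence $H$ is non-increasing along the orbit. Since $\gamma(t)$ is assumed bounded, $H$ is bounded below, so $H(t)\to H_{\infty}$ for some finite limit $H_{\infty}$ as $t\to+\infty$ (and, integrating, $\int_{t_{0}}^{\infty}\left(\tfrac12 y^{2}+\tfrac{\gamma}{2}\rho\right)dt<\infty$).

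Next I would analyse the $\omega$-limit set $\Omega$ of the trajectory. Boundedness makes the forward orbit precompact, so $\Omega$ is nonempty, compact, connected and invariant. By continuity and the monotone convergence $H\to H_{\infty}$, the function $H$ is constant (equal to $H_{\infty}$) on $\Omega$; invariance of $\Omega$ then forces $\dot H\equiv0$ there, i.e.\ $y\equiv0$ and $\gamma\rho\equiv0$ on $\Omega$. For $\gamma>0$ this means $\rho\equiv0$ on $\Omega$ as well. The third step pins down $\Omega$ using invariance once more: on $\Omega$ we have $y\equiv0$, hence $\dot y\equiv0$, and substituting $y=\rho=0$ into \eqref{eq:2} gives $V'(\phi)=0$. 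By assumption~2 the only critical point is $\phi=\phi_{m}$, and the constraint \eqref{eq:5} then yields $3H^{2}=V(\phi_{m})$, i.e.\ $H=\pm\sqrt{V(\phi_{m})/3}$. Thus $\Omega\subseteq\{q_{+},q_{-}\}$, and connectedness of $\Omega$ together with $q_{+}\neq q_{-}$ forces $\Omega$ to be a single equilibrium. Therefore $\gamma(t)\to q_{+}$ or $\gamma(t)\to q_{-}$, which is exactly the assertion that $\gamma(t)\in W^{s}(q_{+})\cup W^{s}(q_{-})$.

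The main obstacle is the LaSalle step itself: rigorously justifying that vanishing of $\dot H$ on the invariant set $\Omega$ genuinely collapses the dynamics onto the equilibria, and in particular handling the degenerate case $\gamma=0$, where $\gamma\rho\equiv0$ on $\Omega$ does not force $\rho\equiv0$ and a continuum of spurious equilibria could appear. I would either restrict to $\gamma>0$ (the relevant late-time regime, e.g.\ dust $\gamma=1$), consistent with the earlier claim that the system has only the two finite equilibria $q_{\pm}$, or treat $\gamma=0$ by a separate argument. A second, more technical point is relating the convergence $\gamma(t)\to q_{\pm}$ to membership in the stable manifold $W^{s}(q_{\pm})$: since the equilibria $q_{\pm}$ are unstable (hyperbolic with unstable directions), $W^{s}(q_{\pm})$ is a lower-dimensional invariant manifold, and it is precisely this positive codimension that will later make bounded solutions a zero-measure set — the conclusion that Lemma \ref{lem:1} is designed to supply for the proof of Theorem \ref{thm:main}.
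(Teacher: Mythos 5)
Your proof follows essentially the same route as the paper: $H$ is used as a monotone (Lyapunov) function, LaSalle's invariance principle confines the positive limit set to the largest invariant subset of $\{\dot H=0\}$, and that subset is identified with $\{q_{+},q_{-}\}$ via the equations of motion and the constraint, with monotonicity of $H$ selecting a single equilibrium. The degenerate case $\gamma=0$ you flag is in fact glossed over in the paper's own proof as well (it writes $\{\dot H=0\}=\{y=\rho=0\}$, which requires $\gamma>0$), so your extra care there is warranted but does not constitute a different approach.
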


\begin{proof}
Let $\mathbb{I}=[t_{0},t_{m})$, where $t_{m}$ is the supremum of the maximal
right extension of $\gamma$, and $\Omega(t)=\{\gamma(t)\,:\,t\in \mathbb{I}%
\}\cup L^{+}(\gamma)$, where $L^{+}(\gamma)$ is the positive limit set of $%
\gamma$. Equation (\ref{eq:4}) implies that $\dot{H}\leq0$ on $\Omega $. Let 
$E=\{x\in\Omega\,:\,\dot{H}=0\}=\{x\in\Omega\,:\,y=\rho=0\}$, and let $%
\eta(t)$ be a solution to the system such that $\eta(t_{0})\in E$ and $%
\eta(t)\in E$, for all $t\geq t_{0}$. It follows that $y(t)=\rho (t)=0$, for
all $t\geq t_{0}$ and, from Eq. \eqref{eq:1}, $\phi(t)=\phi_{0}$ constant
for all $t\geq t_{0}$. From Eq. \eqref{eq:2} we have that $V^{\prime
}(\phi_{0})=0$ and then $\phi_{0}=\phi_{m}$. Since $\dot{H}=0$, $H(t)$ is
constant and from Eq. \eqref{eq:5} it must be $H=\pm\sqrt{\tfrac{V(\phi_{m})%
}{3}}$. Therefore $E=\{q_{\pm}\}$, i.e. is made by the two equilibria of the
system. LaSalle invariance principle \cite{wig} and monotonicity of $H(t)$
ensure that $\gamma(t)$ converges to either $q_{+}$ or $q_{-}$, and then it
belongs to the stable manifold of one of the two equilibria.
\end{proof}

\begin{remark}\label{rem:nongen}
As a consequence of the above fact, we can show that future bounded trajectories of the system
\eqref{eq:1}--\eqref{eq:5} with $\rho(t_{0})>0$ are non generic.

Indeed, let us first observe that, using Eq. \eqref{eq:5}, we can rewrite Eq.
\eqref{eq:4} as follows
\begin{equation}
\dot{H}=-3H^{2}+\left(  1-\frac{\gamma}{2}\right)  \rho+V(\phi). \label{eq:6}%
\end{equation}
Then, let us consider the equivalent system \eqref{eq:1}--\eqref{eq:3} with Eq.
\eqref{eq:6}, and study the Jacobi matrix computed at the equilibria 
$q_{\pm}$. Since $\phi_{m}$ is a non degenerate critical point for $V(\phi)$,
we obtain that the stable manifold of $q_{+}$ is $3$--dimensional and the
stable manifold of $q_{-}$ is $1$--dimensional. In the latter case the result
straightly follows from the previous proposition. Also for the equilibrium
$q_{+}$, the result follows, taking some more care due to the fact that,
actually, Eq. \eqref{eq:5} selects a $3$--dimensional submanifold of initial
data, which anyway can be easily checked to be transversal to $W^{+}(q_{+})$
at $q_{+}$.
\end{remark}

%\subsection{Sufficient conditions for future singularity}\label{sec:lem}
By the above result one can expect in principle that solutions of the system %
\eqref{eq:1}--\eqref{eq:5} are generically unbounded, and our aim is now to
study their qualitative behaviour. The following fact is crucial.

\begin{lemma}\label{lem:2}
Let $\gamma(t)$ be a solution to the system
\eqref{eq:1}--\eqref{eq:5}. If there exists $ t_{1}\ge t_{0}$ and $\bar
V\in\mathbb{R}$ such that, for all $t\ge t_{1}$, $V(\phi(t))\le\bar V$, and either
$(\imath)$ $\bar V<0$, or $(\imath\imath)$
$H(t_{1})<-\sqrt{\frac{\bar V}{3}}$,
then $H(t)$ negatively diverges in a finite time, i.e. \eqref{eq:lem} holds.
\end{lemma}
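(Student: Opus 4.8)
The plan is to reduce the whole problem to a scalar Riccati differential inequality for $H$ alone and then close it off with a comparison argument. First I would eliminate $\rho$ and $y$ from the Raychaudhuri equation \eqref{eq:4} by means of the constraint \eqref{eq:5}. Writing $\rho=3H^{2}-\tfrac12 y^{2}-V(\phi)$ and inserting it into $\dot H=-\tfrac12 y^{2}-\tfrac{\gamma}{2}\rho$ yields
\[
\dot H=-\tfrac12\Bigl(1-\tfrac{\gamma}{2}\Bigr)y^{2}-\tfrac{3\gamma}{2}H^{2}+\tfrac{\gamma}{2}V(\phi).
\]
Since $0\le\gamma\le2$, the coefficient $1-\gamma/2$ is non-negative, so the $y^{2}$ term carries the favourable sign and may be dropped; using the hypothesis $V(\phi(t))\le\bar V$ valid for $t\ge t_{1}$, this gives the autonomous upper bound
\[
\dot H\le-\tfrac{\gamma}{2}\bigl(3H^{2}-\bar V\bigr)=:f(H),\qquad t\ge t_{1}.
\]

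Next I would compare $H$ with the solution $u(t)$ of the scalar ODE $\dot u=f(u)$, $u(t_{1})=H(t_{1})$. As $f$ is a quadratic polynomial, hence locally Lipschitz, the standard comparison theorem for differential inequalities gives $H(t)\le u(t)$ on the common interval of existence. The two hypotheses correspond precisely to the two regimes of this Riccati equation, whose equilibria are $u=\pm\sqrt{\bar V/3}$. In case $(\imath)$ one has $\bar V<0$, so $f(u)<0$ for every $u$; then $u$ decreases monotonically and, since $f(u)\sim-\tfrac{3\gamma}{2}u^{2}$ as $u\to-\infty$, the integral $\int du/f(u)$ converges and $u$ reaches $-\infty$ in finite time. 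In case $(\imath\imath)$ the datum $u(t_{1})=H(t_{1})<-\sqrt{\bar V/3}$ lies strictly below the lower (unstable) equilibrium, so again $f(u)<0$ and $u$ is driven to $-\infty$ in a finite time $t_{*}$. In either case, since $\dot H\le0$ everywhere by \eqref{eq:4}, the function $H$ is monotone and squeezed below $u$, whence $\lim_{t\to t_{*}^{-}}H(t)=-\infty$, which is exactly \eqref{eq:lem}.

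The steps that deserve most care are the following. First, one must check that the comparison solution $u$ exists throughout $[t_{1},t_{*})$ and that $H$ cannot outrun the comparison before $t_{*}$; this is controlled by the monotonicity $\dot H\le0$ together with the fact that $f$ keeps $u$ finite up to the blow-up instant. The genuine obstacle, however, is the boundary value $\gamma=0$, where the derived inequality degenerates to $\dot H\le0$ and the stabilising quadratic term vanishes. This is the cosmological-constant equation of state: the term $\tfrac{\gamma}{2}\rho$ no longer dominates $\rho$, and the coupled conservation equation \eqref{eq:3} may even let $\rho$ grow along the collapsing branch, so a separate estimate bounding $\rho$ (or the exclusion of this single value of $\gamma$) is needed. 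For every $\gamma>0$ the Riccati comparison above is conclusive, and I expect $\gamma=0$ to be the only case requiring extra work.
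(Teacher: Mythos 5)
Your argument is exactly the paper's: substitute the constraint \eqref{eq:5} into \eqref{eq:4}, discard the term $-\tfrac12(1-\tfrac{\gamma}{2})y^2\le 0$ using $\gamma\le 2$, bound $V(\phi(t))$ by $\bar V$, and compare $H$ with the blowing-up solution of the Riccati equation $\dot Z=\tfrac{\gamma}{2}(-3Z^2+\bar V)$; your case analysis of the two hypotheses $(\imath)$ and $(\imath\imath)$ is correct. The one point where you go beyond the paper is the degenerate value $\gamma=0$, and your suspicion is well founded: there the comparison inequality collapses to $\dot H\le 0$ and the paper's proof (which is silent on this) gives nothing. In fact for $\gamma=0$ with nonzero coupling the statement itself can fail: any point with $y=0$ and $V'(\phi)=\alpha(\phi)\rho$ is an equilibrium of \eqref{eq:1}--\eqref{eq:4}, and for suitable small $Q>0$ one can arrange such an equilibrium with $V(\phi)<0$ and $\rho=V'(\phi)/\alpha(\phi)\ge -V(\phi)$, so that the constraint is met and $H$ stays constant forever. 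So the lemma (and the surrounding claims about the equilibria of the system) should be read as requiring $0<\gamma\le 2$; modulo making that restriction explicit, your proof is complete and matches the paper's.
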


\begin{proof}
To show the above, we use Eq. \eqref{eq:4} and recalling that $0\le\gamma\le2
$, we have for $t\ge t_{1}$ 
\begin{equation}  \label{eq:7}
\dot H\le\frac{\gamma}{2}\left( -3H^{2}+\bar V\right).
\end{equation}
Therefore, considering the Cauchy problem 
\begin{equation*}
\dot Z(t)=\frac{\gamma}{2}\left( -3Z(t)^{2}+\bar V\right) ,\qquad
Z(t_{1})=H(t_{1}), 
\end{equation*}
its solution $Z(t)$ is easily seen to diverge to $-\infty$ in a finite time.
The result follows from comparison theorems in ODE theory.
\end{proof}

At this point everything is set up for the proof of the main result.

\begin{proof}[Proof of Theorem \protect\ref{thm:main}]
According to Remark \ref{rem:nongen} bounded trajectories of the system %
\eqref{eq:1}--\eqref{eq:5} are non generic, we can only consider unbounded
solutions without losing genericity. Then at least one of the components of $%
\gamma(t)$ is unbounded. If $H(t)$ is unbounded, then since by Eq. %
\eqref{eq:4} $H(t)$ is decreasing, then it must be negatively unbounded, and
then Lemma \ref{lem:2} immediately gives the result, recalling that $V(\phi)$
is bounded from above. For the rest of the proof we will argue by
contradiction, and show that $H(t)$ must be necessarily unbounded.

So, suppose by contradiction that $H(t)$ is bounded. Then, assuming for sake
of simplicity that $t_{0}=0$, Eq. \eqref{eq:4} implies that there exists a
constant $K>0$ such that $|H(t)|<K$, and also 
\begin{equation}
\int_{0}^{t}\frac{1}{2}y^{2}(s)\,ds<K,\,\int_{0}^{t}\rho(s)\,ds<K,\qquad 
\text{for all } t\geq0.   \label{eq:8}
\end{equation}
Moreover, since 
\begin{equation*}
3H^{2}-V(\phi)=\frac{1}{2}y^{2}+\rho, 
\end{equation*}
and the solution must be unbounded, then either $y^{2}$ or $\rho$ (or both)
are unbounded (otherwise, $V(\phi)$ would be bounded that implies that $\phi$
is positively unbounded, which is excluded since $\phi(t)<\phi_{m}$) and
then from Eq. \eqref{eq:5} also $V(\phi)$ is negatively unbounded.

Suppose that $y^{2}$ is bounded and $\rho$ is unbounded. If $\rho$ diverges
to $\infty$ then by Eq. \eqref{eq:5} also $V(\phi)$ diverges (to $-\infty$).
Therefore, hypotheses from Lemma \ref{lem:2} are satisfied which would imply
that $H(t)$ is unbounded, contradiction. Then $\rho$ cannot diverge to $%
\infty$, and as a consequence there exists an increasing sequence $\{t_{n}\}$
such that $\rho(t_{2n})\rightarrow+\infty$ and $\rho(t_{2n-1})<\bar{\rho}$
for some fixed $\bar{\rho}$. Moreover, 
\begin{equation*}
\rho(t_{2n})-\rho(t_{2n-1})=\int_{t_{2n-1}}^{t_{2n}}\dot{\rho}(t)\,dt=\int
_{t_{2n-1}}^{t_{2n}}-\rho(t)(3\gamma H(t)+\alpha(\phi) y(t))\,dt 
\end{equation*}
and boundedness of  $y$, $H$ and -- due to \eqref{eq:Q} -- $\alpha(\phi)$, implies the existence of some positive
constant $C$ such that 
\begin{equation*}
\rho(t_{2n})-\rho(t_{2n-1})\leq C\int_{t_{2n-1}}^{t_{2n}}\rho(t)\,dt\leq
C\,K 
\end{equation*}
that is a contradiction because the left hand side diverges. Then $y^{2}$
must necessarily be unbounded, and let us now show that even in this case we
get a contradiction. To begin, observe that Eq. \eqref{eq:5} implies 
\begin{equation}
\tfrac{1}{2}y^{2}=3H^{2}-\rho-V(\phi)\leq-V(\phi)+3K^{2},   \label{eq:9}
\end{equation}
where we have also used $|H(t)|<K$, for all $t\geq0$. Let $t_{n}$ an
increasing sequence such that $y^{2}(t_{n})\rightarrow+\infty$. Then $%
\phi(t_{n})<M$ eventually, where $M$ has been defined in Eq. %
\eqref{eq:Vstima} (otherwise $V(\phi(t_{n}))$ would be bounded and then,
from Eq. \eqref{eq:9}, $y^{2}(t_{n})$ would be). Now, if $\phi(t)<M$ is
eventually satisfied for all $t$ sufficiently large (not only on the $t_{n}$%
's, namely) then $V(\phi (t))<V(M)<0$ eventually, and therefore the
hypotheses in Lemma \ref{lem:2} would be satisfied, that would mean that $%
H(t)$ is unbounded. If, on the other side, there exists an increasing
sequence $s_{n}$ such that $s_{n}<t_{n}<s_{n+1}$, $\phi(s_{n})=M$ and $%
\phi(t)<M$ in $(s_{n},t_{n})$, then it must be, by Eq. \eqref{eq:9}, 
\begin{equation*}
\frac{1}{2}y^{2}(s_{n})\leq-V(M)+3K^{2}
\end{equation*}
and therefore, using also the growth assumption made on $V$ and Eqs. %
\eqref{eq:Q}, \eqref{eq:5} and \eqref{eq:8}, 
\begin{multline*}
|y(t_{n})|\leq|y(s_{n})|+\left\vert \int_{s_{n}}^{t_{n}}\dot{y}%
(t)\,dt\right\vert \\
\leq\sqrt{2(-V(M)+3K^{2})}+3\int_{s_{n}}^{t_{n}}|H(t)y(t)|\,dt+%
\int_{s_{n}}^{t_{n}}V^{\prime}(\phi(t))\,dt+A\int_{s_{n}}^{t_{n}}%
\rho(t)\,dt \\
\leq\sqrt{2(-V(M)+3K^{2})}+\frac{3}{2}\int_{s_{n}}^{t_{n}}H^{2}(t)\,dt+%
\frac {3}{2}\int_{s_{n}}^{t_{n}}y^{2}(t)\,dt+\Lambda\int_{s_{n}}^{t_{n}}-V(%
\phi(t))\,dt+A\,K \\
\leq\sqrt{2(-V(M)+3K^{2})}+\frac{1}{2}\int_{s_{n}}^{t_{n}}3H^{2}(t)\,dt+%
\Lambda\int_{s_{n}}^{t_{n}}-V(\phi(t))\,dt+\left( 3+A\right) K \\
\leq\sqrt{2(-V(M)+3K^{2})}+(3+A)K+(1+\Lambda)%
\int_{s_{n}}^{t_{n}}(3H^{2}(t)-V(\phi(t))\,dt \\
=\sqrt{2(-V(M)+3K^{2})}+(3+A)K+\left( 1+\Lambda\right)
\int_{s_{n}}^{t_{n}}\left( \frac{1}{2}y^{2}(t)+\rho(t)\right) \,dt \\
\leq\sqrt{2(-V(M)+3K^{2})}+\left( A+5+2\Lambda\right)K ,
\end{multline*}
that is a contradiction since $|y(t_{n})|$ positively diverges. This means
that $H(t)$ cannot be bounded and therefore the result follows, as said in
the very first part of this argument, from Lemma \ref{lem:2}.
\end{proof}

\section{Discussion}

\label{sec4}

The finite-time singularity theorem of the previous section completes the analysis, carried on in \cite{tzmi}, of
the class of potentials falling to minus infinity as $\phi\rightarrow-\infty$%
, having a global positive maximum and going to zero from above as $%
\phi\rightarrow+\infty$. Assuming that the growth of $|V(\phi)|$ to infinity
is at most exponential, (see Eq. \eqref{eq:Vstima}), we proved that the
corresponding initially expanding Universes, eventually collapse in a finite
time. Our results are valid for scalar fields coupled to matter, as well as
for uncoupled models studied so far in the literature.

Our analysis does not exhaust all forms of potentials taking negative
values. The following list includes the main classes of negative potentials
encountered in the literature:

\begin{enumerate}
\item Potentials having a negative minimum. Two important examples include
the ekpyrotic potentials and those used in models of cyclic universes (see
for example \cite{linde2}, \cite{rasa}, \cite{lehne2} for reviews).

\item Bounded from below potentials with no minimum. As an example, we
mention the potentials 
\begin{equation*}
V\left( \phi \right) =V_{0}e^{-\lambda \phi }-C,\ \ \ V_{0},C,\lambda >0,
\end{equation*}%
which were considered in the context of supersymmetry theories, see for
example \cite{kali}.

\item Potentials increasing from $-\infty $ to $+\infty $, for example 
\begin{equation*}
V\left( \phi \right) =W_{0}-V_{0}\sinh \left( \lambda \phi \right) ,\ \ \ \
\ \lambda ,V_{0}>0,
\end{equation*}%
cf. \cite{gma} where an exact solution was obtained in the absence of
matter.
\end{enumerate}

From the mathematical point of view, the above examples cannot be fully
studied using the techniques exploited in this paper, and we will return to
these points in a future investigation.

\end{document}